\documentclass[twocolumn,journal]{IEEEtran}

\pdfoutput=1 

\IEEEoverridecommandlockouts                              
\overrideIEEEmargins

\usepackage{amsmath,amsfonts,amssymb}
\usepackage{cite,graphicx,url}
\usepackage{color}
\usepackage[ruled,commentsnumbered]{algorithm2e}
\usepackage{enumerate}

\pdfminorversion=4

\graphicspath{{figures/}}

\definecolor{black}{rgb}{0,0,0}

\newcommand{\beq}{\begin{equation}}
\newcommand{\eeq}{\end{equation}}
\newcommand{\beqar}{\begin{eqnarray}}
\newcommand{\eeqar}{\end{eqnarray}}
\newcommand{\beqarno}{\begin{eqnarray*}}
\newcommand{\eeqarno}{\end{eqnarray*}}
\newcommand{\st}{\mathrm{s.t.}}

\newcommand{\umin}{u_{\rm min}}
\newcommand{\umax}{u_{\rm max}}
\newcommand{\ymin}{y_{\rm min}}
\newcommand{\ymax}{y_{\rm max}}
\newcommand{\emin}{e_{\rm min}}
\newcommand{\emax}{e_{\rm max}}

\newcommand{\eqdef}{\triangleq}
\newcommand{\rr}{\mathbb R}
\newcommand{\N}{\mathbb N}
\newcommand{\matrice}[2]{\left[\hspace*{-.1cm}\ba{#1} #2 \ea\hspace*{-.1cm}\right]}
\newcommand{\smallmat}[1]{\left[ \begin{smallmatrix}#1 \end{smallmatrix} \right]}

\newcommand{\ba}[1]{\begin{array}{#1}}
\newcommand{\ea}{\end{array}}

\newcommand{\Ss}{\mathcal{S}}

\newcommand{\RR}{\mathcal{R}}
\newcommand{\XX}{\mathcal{X}}

\newcommand{\fundef}[2]{:\rr^{#1}\rightarrow\rr^{#2}}

\newtheorem{theorem}{Theorem}
\newtheorem{lemma}[theorem]{Lemma}
\newtheorem{definition}[theorem]{Definition}

\newtheorem{proposition}[theorem]{Proposition}

\title{\LARGE \bf A Convex Feasibility Approach to Anytime Model Predictive Control}

\author{Alberto Bemporad, Daniele Bernardini, Panagiotis Patrinos
\thanks{The authors are with IMT Institute for Advanced Studies Lucca, Piazza San Francesco 19, 55100 Lucca, Italy. E-mail: {\tt\{alberto.bemporad,
daniele.bernardini,panagiotis.patrinos\}@imtlucca.it}.}%
}

\begin{document}

\maketitle
\thispagestyle{empty}
\pagestyle{empty}
\begin{abstract}
%

This paper proposes to decouple performance optimization 
and enforcement of asymptotic convergence in Model Predictive Control (MPC)
so that convergence to a given terminal set is achieved
independently of how much performance is optimized at each sampling step. 
By embedding an explicit decreasing condition in the MPC constraints
and thanks to a novel and very easy-to-implement 
convex feasibility solver proposed in the paper, it is possible
to run an outer performance optimization algorithm on top of the feasibility solver
and optimize for an amount of time that depends on the available CPU resources within 
the current sampling step (possibly going open-loop at a given sampling step 
in the extreme case no resources are available) and still guarantee convergence to the terminal set. While the MPC setup and the solver
proposed in the paper can deal with quite general classes of functions, we highlight
the synthesis method and show numerical results in case of linear MPC and ellipsoidal and polyhedral terminal sets.
\end{abstract}

\section{Introduction}
Model Predictive Control (MPC) is a well known advanced control approach in industry for
its capability of optimizing closed-loop performance subject to operating constraints on 
input and output variables~\cite{MR09,Mac02,Bem06}. In recent years, MPC has become
very attractive also in fast-sampling applications with stringent real-time requirements,
such as those arising in the automotive and aerospace industries. Such requirements
posed a research challenge for developing optimization algorithms, and in particular Quadratic Programming (QP) solvers, that enable the use of MPC in commercial products. In particular,
an embedded optimization solver must be fast, simple to code and test, require little memory, and have good worst-case estimates of its execution time.

To cope with such requirements, multiparametric QP was proposed in~\cite{BMDP02a} to pre-solve the QP off-line, therefore converting the MPC law into a continuous and piecewise affine function of the state vector. The main drawback of explicit MPC is that it is limited to relatively small problems and to linear time-invariant (LTI) systems.

On-line optimization methods like active-set methods~\cite{Ric85,QPOASES,SB94}, interior-point methods~\cite{MB10,WB10,ZRDMJ14}, and dual piecewise smooth Newton methods~\cite{PSS11} can be very effective in speed, but their worst-case CPU time can be hard to estimate in a non-conservative way. For accelerated dual gradient-projection methods~\cite{Nes83}, good bounds on the worst-case execution time were provided~\cite{PB14,RPB13}, although the methods act on the dual QP problem, and therefore can lead to infeasible solutions if the execution is interrupted.

On the other hand, in real-time control platforms the time allotted for the MPC controller to run is often not enough to cover the worst-case execution time, and other higher-priority tasks may even preempt its full execution. Driven by such real-time constraints, \emph{anytime control algorithms} were developed in~\cite{FGB08} with the idea of storing a set of control laws, each one of different complexity and closed-loop performance, and execute the one whose complexity is compatible with the current available CPU resources.

In this paper we propose instead an MPC approach based on \emph{anytime optimization}, with a novel convex optimization algorithm that recursively finds feasible solutions of decreasing level of suboptimality, depending on the computation power available within the sampling step. We first prove a rather general recursive feasibility and convergence result of MPC based on stability constraints that artificially impose a certain Lyapunov function to be decreasing~\cite{Bem98rhl,BB12}, where this function might be totally decoupled from the value function typically considered for assessing asymptotic convergence~\cite{SMR99,ZRDMJ14}. Moreover, in this paper we focus on \emph{convergence to a set} around an equilibrium rather than to an equilibrium state, as in practical applications is often sufficient to track a set-point within a given tolerance~\cite{DB13}. In addition, contrarily to~\cite{ZRDMJ14} that guarantees feasibility in real-time through a warm-starting technique in combination with robust MPC design, we provide an approach based on an original method for solving unconstrained problems, which is used for finding a feasible solution to a set of convex constraints. This method is very efficient in speed and easy to code, and can be run multiple times to approach an optimal solution, depending on the available CPU time.

The paper is organized as follows. Section~\ref{sec:setup} defines the main MPC setup and states recursive feasibility and convergence results. Section~\ref{sec:algo} presents the new convex feasibility and optimization algorithm setup and shows its properties. Section~\ref{sec:terminal-set} proposes two ways of synthesizing a proper terminal set and stability constraints for linear systems subject to linear constraints on inputs and outputs, and Section~\ref{sec:simulations} shows numerical evidence of the advantages of the proposed approach.

\subsection{Notation}
The sets of real and nonnegative integer numbers are denoted by $\rr$, $\N$, respectively.
For a vector $x\in\rr^{n}$, $x_i$ denotes the $i$-th entry of $x$, and the expression $x>0$ means that $x_i>0$, $\forall i=1,\ldots,n$. For a matrix $A\in\rr^{n\times m}$, $A_i$ denotes the $i$-th row of $A$, 
and $A>0$ positive definiteness of $A$. Given a scalar $\tau$, $\tau_+$ denotes $\max\{\tau,0\}$;
for a vector $x\in\rr^m$, $x_+$ is the vector whose coordinates are $(x_+)_i=\max\{x_i,0\}$, 
$\forall i=1\ldots,m$.

\section{Feasibility-based MPC}
\label{sec:setup} 
Consider the problem of steering the system
\begin{equation}
	x(t+1)=\alpha(x(t),u(t))
\label{eq:NLprocess}
\end{equation}
to a target set $\Ss\subseteq\rr^n$ while satisfying
the constraints
\begin{equation}
g(x(t),u(t))\leq 0
\label{eq:constraints}
\end{equation}
for all $t\in\N$, where $x\in\rr^n$, $u\in\rr^m$,
$\alpha\fundef{n+m}{}$ and $g\fundef{n+m}{}$. We represent the target set as
\begin{equation}
	\Ss\eqdef\{x\in\rr^n:\ f(x)\leq 0\}
\label{eq:target-set}
\end{equation}
where $f\fundef{n}{}$ and $\Ss$ is constrained controlled
invariant with respect to~\eqref{eq:constraints}, in accordance with the 
following definition.

\begin{definition}
A set $\Ss\subseteq\rr^n$ is \emph{constrained controlled invariant} 
if for all $x\in\Ss$ there exists $u\in\rr^m$ such that $g(x,u)\leq 0$, $\alpha(x,u)\in\Ss$.
\end{definition}

Note that in~\eqref{eq:constraints}--\eqref{eq:target-set} we are assuming scalar functions~$f$,~$g$ without loss of generality. In fact, for any vector function $h\fundef{n_1}{n_2}$, the component-wise constraint $h(x)\leq 0$ is equivalent to the constraint $\bar h(x)\eqdef\max_{i=1,\ldots,n_2}h_i(x)\leq 0$.
Note also that, given any set $\Ss\subseteq\rr^n$, a corresponding function $f\fundef{n}{}$ satisfying~\eqref{eq:target-set} can be defined as $f(x)=\gamma_{\Ss}(x)-1$, where $\gamma_{\Ss}(x)=\inf\{\mu: \mu\geq0,\, x\in\mu\Ss\}$ is the Minkowski function of $\Ss$. 


To solve the stated control problem, we consider the 
following MPC formulation 
\begin{subequations}
\begin{IEEEeqnarray}{lcl}
\min &\ & \ell_N(x_N)+\sum_{k=0}^{N-1}\ell_k(x_k,u_k) \label{eq:cost}\\
\st  &\ & x_0=x(t)\\
     &\ & \ x_{k+1}=\alpha(x_k,u_k),\  k=0,1,\ldots,N-1\label{eq:model}\\
     &\ & g(x_k,u_k)\leq 0,\ k=0,1,\ldots,N-1\label{eq:stage-constraint}\\
     &\ & f(x_N)\leq 0\label{eq:term-constraint}\\
     &\ & \sum_{k=1}^{N-1} f(x_k)_+\leq \phi(t-1)\label{eq:stab-constraint}
\end{IEEEeqnarray}%
\label{eq:MPC-problem}%
\end{subequations}
where $\ell_k\fundef{n+m}{}$, $k=0,\ldots,N-1$, $\ell_N\fundef{n}{}$ are stage and terminal costs, respectively, and $\phi(t)$ is a given scalar, chosen
in accordance with the following theorem.

\begin{theorem}
\label{th:convergence}
Let $u(t)=u_0^{t}$ be the control input applied to the process~\eqref{eq:NLprocess}, where $\{u_k^t\}_{k=0}^{N-1}$ is \emph{any feasible solution} of problem~\eqref{eq:MPC-problem} at time $t$, and the quantity
\beq
	\phi(t-1)\eqdef\sum_{k=1}^{N-1} f(x^{t-1}_k)_+
\eeq
is constructed from the previous feasible solution $u^{t-1}_k$, $x^{t-1}_k$ of problem~\eqref{eq:MPC-problem} at time $t-1$, for all $t\in\N$. If the set $\Ss$
defined in~\eqref{eq:target-set} is constrained controlled invariant and
problem~\eqref{eq:MPC-problem} is feasible at time $t=0$ for the initial state $x(0)$ and some value $\phi(-1)$, then it is feasible at all time $t\in\N$ and $x(t)\rightarrow\Ss$ for $t\rightarrow\infty$.
\end{theorem}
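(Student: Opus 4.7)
The plan is to combine the standard MPC \emph{shifting} trick for recursive feasibility with a telescoping argument that treats the stability constraint~\eqref{eq:stab-constraint} as a Lyapunov-like descent condition.

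\emph{Step 1 (shifted candidate).} Assume inductively that a feasible solution $\{u^{t-1}_k,x^{t-1}_k\}$ exists at time $t-1$. At time $t$ I would propose the candidate $\tilde u_k := u^{t-1}_{k+1}$ for $k=0,\ldots,N-2$, with $\tilde u_{N-1}$ taken as any admissible input supplied by constrained controlled invariance of $\Ss$ at $x^{t-1}_N\in\Ss$. The resulting predicted states are $\tilde x_k = x^{t-1}_{k+1}$ for $k=0,\ldots,N-1$ and $\tilde x_N\in\Ss$. Constraints~\eqref{eq:model}--\eqref{eq:term-constraint} are immediate: dynamics and~\eqref{eq:stage-constraint} on $k=0,\ldots,N-2$ are inherited from $t-1$, controlled invariance takes care of $k=N-1$, and~\eqref{eq:term-constraint} holds because $\tilde x_N\in\Ss$.

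\emph{Step 2 (stability sum for the shift).} The key computation is
\[
\sum_{k=1}^{N-1} f(\tilde x_k)_+ = \sum_{j=2}^{N} f(x^{t-1}_j)_+ = \phi(t-1) - f(x^{t-1}_1)_+ + f(x^{t-1}_N)_+.
\]
The terminal constraint at $t-1$ forces $f(x^{t-1}_N)_+ = 0$, and $x^{t-1}_1 = x(t)$, so this reduces to $\phi(t-1)-f(x(t))_+\le\phi(t-1)$, which is exactly~\eqref{eq:stab-constraint}. Hence the shifted candidate is feasible, and induction on $t$ starting from the hypothesis of feasibility at $t=0$ establishes recursive feasibility.

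\emph{Step 3 (convergence via telescoping).} Because the shifted candidate is always available with stability sum $\phi(t-1)-f(x(t))_+$, the applied $\phi(t)$ inherits $\phi(t)\le\phi(t-1)-f(x(t))_+$. Telescoping from $1$ to $T$ gives $\sum_{t=1}^{T} f(x(t))_+ \le \phi(0)-\phi(T)\le\phi(0)$. Letting $T\to\infty$ makes the series converge, so $f(x(t))_+\to 0$, and by continuity of $f$ this means $\mathrm{dist}(x(t),\Ss)\to 0$.

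\emph{Main obstacle.} The delicate point is Step~3: constraint~\eqref{eq:stab-constraint} read in isolation only gives $\phi(t)\le\phi(t-1)$, a bound too weak to telescope into a summable series. The Lyapunov decrease really comes from the explicit availability of the shifted candidate at every sampling instant, so the theorem's ``any feasible solution'' should be understood as any solution the anytime solver returns after being warm-started by the shifted candidate; such a solution automatically dominates the candidate's stability sum and hence inherits the sharper bound that drives $x(t)$ to $\Ss$.
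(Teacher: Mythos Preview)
Your Steps~1 and~2 coincide with the paper's argument: shift the previous sequence, append an input supplied by controlled invariance of $\Ss$, and observe that the shifted candidate's stability sum equals $\phi(t-1)-f(x(t))_+\le\phi(t-1)$. The paper then asserts the same sharp decrease $\phi(t+1)\le\phi(t)-f(x(t+1))_+$ for \emph{any} chosen feasible solution and deduces $f(x(t))_+\to 0$ from the fact that $\phi$ is nonincreasing and bounded below; your telescoping argument is an equivalent way to conclude.

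You are right to isolate the weak link in Step~3, and in fact the paper's own proof makes the identical leap: from the shifted candidate achieving $\phi(t)-f(x(t+1))_+$ it jumps to ``whatever is the choice of $\{u_k^{t+1}\}$'', whereas constraint~\eqref{eq:stab-constraint} as written only forces $\phi(t+1)\le\phi(t)$. Your proposed resolution, however, does not close the gap. Warm-starting the outer optimizer with the shifted candidate drives the \emph{cost}~\eqref{eq:cost} down, but nothing in the bisection/feasibility loop forces the stability sum $\sum_{k=1}^{N-1}f(x_k)_+$ to stay at or below its warm-start value; a later iterate can have smaller cost yet larger stability sum, provided the latter remains $\le\phi(t-1)$. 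So the returned solution need not ``automatically dominate the candidate's stability sum''. The clean fix is structural rather than algorithmic: tighten the right-hand side of~\eqref{eq:stab-constraint} to $\phi(t-1)-f(x(t))_+$, a quantity known at time $t$. The shifted candidate meets this tightened bound with equality, so recursive feasibility is unaffected, and now \emph{every} feasible solution obeys the sharp decrease your Step~3 needs. The paper in fact imposes exactly this tightening in the algorithmic reformulation~\eqref{eq:stab-constraint2}.
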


\proof
Let $\{u_k^t\}_{k=0}^{N-1}$ be any feasible solution of problem~\eqref{eq:MPC-problem} chosen at time $t$, and let $\{x_k^t\}_{k=0}^{N}$ be the corresponding state
trajectory. Consider the following candidate feasible solution 
$\{\bar u_k\}_{k=0}^{N-1}$ at time $t+1$, where
$\bar u_0=u_1^{t}$, $\bar u_1=u_2^{t}$, $\ldots$, $\bar u_{N-2}=u_{N-1}^{t}$,
and $\bar u_{N-1}$ such that $g(x_N^t,\bar u_{N-1})\leq 0$,
which exists by constrained controlled invariance of $\Ss$.
Let $\{\bar x_k\}_{k=0}^{N}$ the state trajectory corresponding 
to $\{\bar u_k\}_{k=0}^{N-1}$, with $\bar x_0=x(t+1)$. 
By construction, $\bar x_k=x_{k+1}^t$ for all $k=0,\ldots,N-1$, and
hence $g(\bar x_k,\bar u_k)\leq 0$ for all $k=0,\ldots,N-2$,
$f(\bar x_{N-1})\leq 0$. Moreover, by the choice of $\bar u_{N-1}$
we have $g(\bar x_{N-1},\bar u_{N-1})\leq 0$ and $f(\bar x_N)\leq 0$. 
Since $u(t)=u_0(t)$, and hence $x_1^t=x(t+1)$, we have
\begin{equation}
	\sum_{k=1}^{N-1} f(\bar x_k)_+=\sum_{k=2}^{N-1} f(x_k^t)_+=\phi(t)-f(x(t+1))_+ 
\label{eq:decreasing-phi}
\end{equation}
so that, since $f(x(t+1))_+\geq 0$, also the stability constraint~\eqref{eq:stab-constraint} is satisfied.
Therefore, problem~\eqref{eq:MPC-problem} admits a feasible solution at time $t+1$,
and because of~\eqref{eq:decreasing-phi}, whatever is the choice of 
$\{u_k^{t+1}\}_{k=0}^{N-1}$ at time $t+1$, we have
\begin{equation}
	\phi(t+1)\leq\phi(t)-f(x(t+1))_+
\label{eq:decreasing-phi2}
\end{equation}
This proves that $\lim_{t\rightarrow\infty}\phi(t)$ exists, as $\phi$ is a monotonically decreasing sequence and lower-bounded by zero, which in turns implies by~\eqref{eq:decreasing-phi2} that $\lim_{t\rightarrow\infty}f(x(t))_+=0$.
If by contradiction we assume that $x(t)\not\rightarrow\Ss$ for $t\rightarrow\infty$,
then a subsequence $t_h\in\N$, $h\in\N$, and a scalar $\delta>0$ exist such that $f(x(t_h))\geq \delta$, $\forall h\in\N$, or equivalently $f(x(t_h))_+\geq \delta$, which contradicts
$\lim_{t\rightarrow\infty}f(x(t))_+=0$.
\endproof
Note that the convergence result of Theorem~\ref{th:convergence} does not involve
at all the cost function~\eqref{eq:cost}. Of course, the transient behaviour of
the system depends on how close to optimality are the chosen 
feasible solutions $\{u_k^t\}_{k=0}^{N-1}$ of~\eqref{eq:MPC-problem}.

While the result of Theorem~\eqref{th:convergence} does not make
any assumption on the properties of functions $\alpha$, $f$, $g$, $\ell_k$ (except for constrained controlled invariance of $\Ss$), from now on we will restrict our attention to \emph{convex} functions $f$, $g$, $\ell_k$ and \emph{linear} functions $\alpha$, i.e., linear models
\begin{equation}
	x(t+1)=Ax(t)+Bu(t)
\label{eq:process}
\end{equation}
in order to solve problem~\eqref{eq:MPC-problem} effectively, using the novel algorithm proposed in the next section.

\section{Convex Feasibility Algorithm}
\label{sec:algo}

Consider the following feasibility problem:
\begin{equation}\label{eq:Feas}
\textrm{Find } x\in C\eqdef\{x\in\rr^n:\ f_i(x)\leq 0,\ i=1,\ldots,m\},
\end{equation}
where $f_i:\rr^n\to\rr$, $i=1,\ldots,m$ are convex, twice continuously differentiable functions. Problem~\eqref{eq:Feas} can be reformulated as the following unconstrained minimization problem:
\begin{equation}\label{eq:FeasMin}
\min\ F(x)=\frac{1}{2}\|f(x)_+\|_2^2
\end{equation}
where $f(x)=[f_1(x)\ \ldots\ f_m(x)]$. 
\begin{proposition}\label{prop:Suff}
If $C\neq\emptyset$, then $\inf F=0$ and $\arg\min F=C$. 
\end{proposition}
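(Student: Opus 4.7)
The plan is to exploit the elementary fact that $F$ is a sum of squares of the quantities $(f_i(x))_+$, each of which is nonnegative and vanishes precisely when $f_i(x)\leq 0$. So the proof reduces to chasing this equivalence through the definition of $C$.

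First I would note that $F(x)=\tfrac{1}{2}\sum_{i=1}^m ((f_i(x))_+)^2 \geq 0$ for every $x\in\rr^n$, which gives $\inf F \geq 0$ as a free lower bound. Next, I would pick any $\bar x\in C$ (which exists by hypothesis); since $f_i(\bar x)\leq 0$ for every $i$, we have $(f_i(\bar x))_+=0$ for every $i$, so $F(\bar x)=0$. Combining this with the lower bound yields $\inf F = 0$, and furthermore shows $C\subseteq \arg\min F$.

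For the reverse inclusion, I would take any $x^\star\in\arg\min F$, so $F(x^\star)=0$. Because $F$ is a sum of nonnegative terms, each term must vanish: $(f_i(x^\star))_+=0$ for all $i$, which is exactly the statement $f_i(x^\star)\leq 0$ for all $i$, i.e.\ $x^\star\in C$. Hence $\arg\min F \subseteq C$, and together with the previous inclusion we get $\arg\min F = C$.

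There is essentially no hard step here; the only thing worth being careful about is that the proposition genuinely requires $C\neq\emptyset$ to conclude $\inf F = 0$ (otherwise one would only get $\inf F \geq 0$ with possibly strict inequality), and that the characterization of $\arg\min F$ rests on the positivity of the summands in $\|f(x)_+\|_2^2$, which makes the vanishing of the sum equivalent to the componentwise vanishing.
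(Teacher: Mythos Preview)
Your argument is correct and is exactly the natural justification of this elementary fact; the paper itself states the proposition without proof, treating it as immediate from the definition of $F$ as a sum of nonnegative terms that vanish precisely on $C$. There is nothing to add or compare.
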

Clearly, if $\inf F>0$ then $C$ is empty.

\begin{lemma}
Function $F(x)=\frac{1}{2}\|f(x)_+\|_2^2$ is convex and continuously differentiable with
\begin{equation}\label{eq:gradF}
\nabla F(x)=\nabla f(x)'f(x)_+=\sum_{i=1}^mf_i(x)_+\nabla f_i(x)
\end{equation}
\end{lemma}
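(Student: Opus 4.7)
The plan is to prove the result term by term: write $F=\sum_{i=1}^m \phi_i$ with $\phi_i(x)\eqdef\tfrac12(f_i(x)_+)^2$, and show each $\phi_i$ is convex and $C^1$ with $\nabla\phi_i(x)=f_i(x)_+\nabla f_i(x)$. Convexity and differentiability of $F$, together with formula~\eqref{eq:gradF}, then follow by summation.

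For convexity of $\phi_i$, I would argue as a composition: because $f_i$ is convex and $t\mapsto t_+$ is convex and nondecreasing on $\rr$, the function $x\mapsto f_i(x)_+$ is convex and nonnegative; then $\phi_i$ is the composition of this nonnegative convex function with the convex, nondecreasing map $s\mapsto \tfrac12 s^2$ on $[0,\infty)$, hence convex. This is standard and not the obstacle.

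The main work is $C^1$ regularity, since $t\mapsto t_+$ is not differentiable at $0$. I would split into three cases according to the sign of $f_i(x)$. If $f_i(x)>0$, by continuity $f_i>0$ in a neighborhood, so locally $\phi_i=\tfrac12 f_i^2$ and the classical chain rule gives $\nabla\phi_i(x)=f_i(x)\nabla f_i(x)=f_i(x)_+\nabla f_i(x)$. If $f_i(x)<0$, $\phi_i$ vanishes in a neighborhood, so $\nabla\phi_i(x)=0=f_i(x)_+\nabla f_i(x)$. The delicate case is $f_i(x)=0$, where the candidate gradient is $0$. I would verify it directly from the definition: using a first-order expansion $f_i(x+h)=\nabla f_i(x)'h+o(\|h\|)$ and the Lipschitz bound $0\leq t_+\leq |t|$, we get $0\leq f_i(x+h)_+\leq \|\nabla f_i(x)\|\,\|h\|+o(\|h\|)$, so
\begin{equation*}
\phi_i(x+h)-\phi_i(x)=\tfrac12 (f_i(x+h)_+)^2=O(\|h\|^2)=o(\|h\|),
\end{equation*}
which confirms $\nabla\phi_i(x)=0$. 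Continuity of $\nabla\phi_i$ at such a point then follows because $f_i(\cdot)_+$ is continuous and vanishes there, while $\nabla f_i$ is continuous by hypothesis, so the product $f_i(x)_+\nabla f_i(x)$ is continuous on all of $\rr^n$.

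Summing over $i$, $F=\sum_i\phi_i$ is convex, continuously differentiable, and satisfies $\nabla F(x)=\sum_i f_i(x)_+\nabla f_i(x)$, which is precisely~\eqref{eq:gradF} after identifying the sum with $\nabla f(x)'f(x)_+$ for $\nabla f(x)\in\rr^{m\times n}$ the Jacobian whose $i$-th row is $\nabla f_i(x)'$. The only subtle step is the kink case $f_i(x)=0$, which is resolved by the quadratic dampening from squaring.
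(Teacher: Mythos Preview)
Your proof is correct and follows essentially the same route as the paper: the same term-by-term decomposition $F=\sum_i\phi_i$, the same composition argument for convexity, and the same chain-rule computation $\nabla\phi_i(x)=f_i(x)_+\nabla f_i(x)$ via the scalar function $\varphi(z)=\tfrac12(z_+)^2$. The only difference is that the paper simply asserts $\varphi\in C^1$ with $\varphi'(z)=z_+$ and invokes the chain rule, whereas you unpack this at the kink $f_i(x)=0$ by a direct first-order estimate; your extra detail is sound and makes explicit what the paper leaves implicit.
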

\begin{proof}
Function $F$ can be written as 
$$F(x)=\frac{1}{2}\sum_{i=1}^m(\max\{f_i(x),0\})^2=\frac{1}{2}\sum_{i=1}^m q(\psi_i(x)),$$
 where $\rr\ni z\mapsto q(z)=z^2$ and $\rr\ni z\mapsto \psi_i(x)=\max\{f_i(x),0\}$, $q\circ\psi$ is convex since $q$ is convex and nondecreasing when its argument is nonnegative and $\psi_i$ is convex (as the pointwise maximum of the convex function $f_i$ and $0$) and nonnegative. Therefore $F$ is convex as the sum of convex functions. On the other hand, $F(x)=\sum_{i=1}^m \varphi(f_i(x))$ where  $\rr\ni z\mapsto \varphi(z)=\frac{1}{2}(z)_+^2$. Since $\varphi$ is continuously differentiable with $\frac{d\varphi(z)}{dz}=(z)_+$, continuous differentiability of $F$ and formula~\eqref{eq:gradF} readily follow. 
\end{proof}
Our ultimate goal is to devise Newton-like methods for solving the unconstrained problem~\eqref{eq:FeasMin} and thus~\eqref{eq:Feas}. Function $F$ is $\mathcal{C}^1$ but not $\mathcal{C}^2$, however its gradient $\nabla F(x)$ is a piecewise smooth mapping, in the sense that for any $x\in\rr^n$ we have
\begin{equation}
\nabla F(x)\in\{\nabla F_I(x)\}_{I\in\mathcal{I}},
\label{eq:Nabla_F}
\end{equation}
where $\mathcal{I}$ is the collection of all subsets $I\subseteq\{1,\ldots,m\}$ for which there exists a $x\in\rr^n$ such that $f_i(x)\geq 0$, $i\in I$ and $f_i(x)< 0$, $i\notin I$, whereas $\nabla F_I(x)=\sum_{i\in I}f_i(x)\nabla f_i(x)$. 
The \emph{pieces} of $\nabla F$ are smooth with Jacobian given by
\begin{equation}
\nabla^2 F_I(x)=\sum_{i\in I}\nabla f_i(x)\nabla f_i(x)'+ f_i(x)\nabla^2 f_i(x).
\label{eq:Nabla2_F}
\end{equation}
For any $x\in\rr^n$ let $I(x)=\{i\in\{1,\ldots,m\}: f_i(x)\geq 0\}$. Then the matrix $\nabla^2F_{I(x)}$ will serve as a generalized Hessian of $F$ at $x$, furnishing a second-order approximation similar to the one provided by the classical Hessian for $\mathcal{C}^2$ functions. Another idea, stemming from Gauss-Newton methods for solving least-squares problems, would be to use as a generalized Hessian the matrix $\sum_{i\in I(x)}\nabla f_i(x)\nabla f_i(x)'$,
which results by omitting second-order terms $\nabla^2 f_i(x) f_i(x)$ from $\nabla^2F_{I(x)}$. This choice saves us from computing the Hessians of $f_i$, $i\in I(x)$ (notice that in case of $f_i(x)=a_i'x-b_i$ this makes no difference). However, this is a good choice only if we know that $C$ is nonempty. In this case for any $\bar{x}\in C$ we have $f_i(\bar{x})=0$ for $i\in I(\bar{x})$ and the term $\nabla^2 f_i(\bar{x}) f_i(\bar{x})$ vanishes. 

Algorithm~\ref{al:PSN_FEAS} is a regularized piecewise smooth Newton method with line search. Its convergence properties can be inferred as a special case of~\cite{PB13,PSB14}. Specifically, every accumulation point of the sequence generated is a stationary point of $F$, and if $\nabla^2F_{I(x^\star)}$ is nonsingular then the convergence rate is quadratic.
\begin{algorithm}
\label{al:PSN_FEAS}
\LinesNumbered
\DontPrintSemicolon
\caption{\mbox{[\texttt{empty}, $x$]=\texttt{PSN\_FEAS}$(C)$}} 
\KwIn{$\sigma\in \left(0,1/2\right)$, $\zeta\in (0,1)$, $x^0\in\rr^n$, $k=0$}
\uIf{$F(x^k)= 0$}{\texttt{empty}$\gets$\texttt{false}, $x\gets x^k$; \textbf{exit}}
\ElseIf{$\|\nabla F(x^k)\|=0$}{\texttt{empty}$=$\texttt{true}; \textbf{exit}}
Compute $d^ k\in\rr^n$ that solves
\begin{equation}\label{eq:RegNewtSys}
(\nabla^2F_{I^k}(x^k)+\delta^k I)d=-\nabla F(x^ k),
\end{equation}
where 
$I^k=\{i\in [m]\ |\ f_i(x^k)\geq 0\}$,
$\delta^k=\zeta\|\nabla F(x^ k)\|$.
\;
Compute 
$\tau^k=\max\{2^{-i}\ |\ i=0,1,2,\ldots\}$ such that
\begin{equation}\label{eq:Armijo}
F({x}^{ k}+\tau^k d^ k)\leq F({x}^{ k})+\sigma \tau_ k\nabla F({x}^{ k})'d^{ k}.
\end{equation}\; 
${x}^{ k+1}\gets {x}^{ k}+\tau^k d^ k$\;
$ k\gets k+1$ and go to Step 1.\;
\end{algorithm}

\subsection{Feasibility-based optimization}
Problems involving constrained minimization of a $\mathcal{C}^2$ convex function can be attacked by solving a sequence of feasibility problems. Specifically consider the problem
\begin{subequations}\label{eq:OptProb}
\begin{align}
\min&\ f_0(x)\\
\st&\ x\in C
\end{align}
\end{subequations}
where $C\subseteq\rr^m$ is a  closed convex set described as in~\eqref{eq:Feas}. Let $f^\star=\inf_{x\in C}f_0(x)$ and $X^\star=\arg\min_{x\in C}f_0(x)$.
We assume that
the set of optimal solutions of~\eqref{eq:OptProb} is nonempty. 
We have that
\begin{align*}
f^\star=\inf\{t:\ x\in S(t)\},\ X^\star=S(f^\star)
\end{align*}
where 
\begin{equation}\label{eq:levSet}
S(t)=\{x\in C:\ f_0(x)\leq t\}
\end{equation} 
is the lower level set of $f_0$ over $C$. Obviously we have that $t\geq f_\star$ if and only if $S(t)$ is nonempty. This suggests that we can test whether a given $t$ is smaller or larger than $f_\star$ by solving the feasibility problem of finding $x\in S(t)$
\begin{equation}\label{eq:phi}
\varphi(t)=\inf_{x\in\rr^n}\ \gamma(t,x)\eqdef\tfrac{1}{2}(f_0(x)-t)^2_++\tfrac{1}{2}\sum_{i=1}^m(f_i(x)_+)^2
\end{equation}
using Algorithm~\ref{al:PSN_FEAS}. The following proposition, whose proof is omitted here
for lack of space, proves some interesting properties enjoyed by function $\varphi$.
\begin{proposition}\label{prop:phiProps}
Assume problem~\eqref{eq:OptProb} admits an optimal solution and let function $\varphi$ be
defined as in~\eqref{eq:phi}.
\begin{enumerate}[(i)]
\item $\varphi$ is real-valued with $\varphi(t)>0$ for $t<f^\star$, whereas $\varphi(t)=0$ for $t\geq f^\star$,\label{it:welldef}
\item $\varphi$ is convex and continuously differentiable with 
\begin{equation}\label{eq:phiGrad}
\frac{d\varphi(t)}{dt}=\nabla_t\gamma(t,x_t)=-(f_0(x_t)-t)_+,
\end{equation}
where $x_t\in\arg\min_{x\in\rr^n}\gamma(x,t)$.
\item $\frac{d\varphi(t)}{dt}<0$ for $t<f^\star$, whereas $\frac{d\varphi(t)}{dt}=0$ for $t\geq f^\star$.\label{it:gradphi}
\end{enumerate}   
\end{proposition}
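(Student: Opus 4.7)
For part (i), the plan is to apply Proposition~\ref{prop:Suff} to the enlarged feasibility system $\{f_0(x)-t\le 0,\ f_i(x)\le 0,\ i=1,\ldots,m\}$, whose solution set is precisely $S(t)$ and whose associated sum-of-squares residual is exactly $\gamma(t,\cdot)$. This immediately yields $\varphi(t)=\inf\gamma(t,\cdot)=0$ iff $S(t)\ne\emptyset$, which by definition of $f^\star$ occurs iff $t\ge f^\star$; otherwise $\varphi(t)>0$. Real-valuedness would follow by evaluating $\gamma$ at any $\bar x\in X^\star$, which gives the finite upper bound $0\le\varphi(t)\le\gamma(t,\bar x)=\tfrac{1}{2}(f^\star-t)_+^2<+\infty$.

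For (ii), I would first observe that $\gamma$ is jointly convex in $(t,x)$: the map $(t,x)\mapsto f_0(x)-t$ is jointly convex (affine in $t$, convex in $x$), $z\mapsto\tfrac{1}{2}z_+^2$ is convex and nondecreasing, and the remaining terms $\tfrac{1}{2}(f_i(x)_+)^2$ are convex in $x$ alone. Partial minimization of a jointly convex function in one argument yields a convex function, so $\varphi$ is convex. For differentiability, my plan is to exploit the envelope inequality: for any minimizer $x_t$ and any $s\in\rr$,
\[
\varphi(s)-\varphi(t)\le\gamma(s,x_t)-\gamma(t,x_t)=\tfrac{1}{2}(f_0(x_t)-s)_+^2-\tfrac{1}{2}(f_0(x_t)-t)_+^2,
\]
divide by $s-t$, and let $s\to t^\pm$ separately. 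This bounds the right derivative from above and the left derivative from below by $-(f_0(x_t)-t)_+$; convexity of $\varphi$ reverses the inequality between one-sided derivatives, so both equal $-(f_0(x_t)-t)_+$. Continuous differentiability is then automatic for a finite convex function on $\rr$ whose derivative exists everywhere, and as a by-product the value $-(f_0(x_t)-t)_+$ does not depend on the choice of minimizer.

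For (iii), when $t\ge f^\star$ any minimizer has $\gamma(t,x_t)=0$, hence $(f_0(x_t)-t)_+=0$ and $\varphi'(t)=0$. When $t<f^\star$, I would argue by contradiction: if $(f_0(x_t)-t)_+=0$, then the first-order condition $\nabla_x\gamma(t,x_t)=0$ collapses to $\sum_i f_i(x_t)_+\nabla f_i(x_t)=0$, so $x_t$ is a stationary, hence global, minimizer of the convex function $F_1(x)\eqdef\tfrac{1}{2}\sum_i(f_i(x)_+)^2$. Applying Proposition~\ref{prop:Suff} to the nonempty set $C$ forces $\min F_1=0$, which places $x_t\in C$; combined with $f_0(x_t)\le t<f^\star=\inf_{x\in C}f_0(x)$, this contradicts the definition of $f^\star$. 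The hard part will be the envelope sandwich argument of (ii); parts (i) and (iii) are essentially bookkeeping on top of Proposition~\ref{prop:Suff} and the stationarity condition $\nabla_x\gamma(t,x_t)=0$.
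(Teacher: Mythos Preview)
The paper explicitly omits the proof of this proposition (``whose proof is omitted here for lack of space''), so there is no authors' argument to compare against. Your proposal is a correct and natural route: part~(i) via Proposition~\ref{prop:Suff} applied to the augmented constraint system $\{f_0(x)-t\le 0,\ f_i(x)\le 0\}$, part~(ii) via joint convexity of $\gamma$ together with a Danskin-type envelope sandwich, and part~(iii) via the stationarity condition $\nabla_x\gamma(t,x_t)=0$ combined once more with Proposition~\ref{prop:Suff} for the constraint-only residual.

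The only point worth tightening is your ``iff'' in part~(i): Proposition~\ref{prop:Suff} gives only the implication $S(t)\neq\emptyset\Rightarrow\varphi(t)=0$, not its converse, so ``$\varphi(t)=0$ iff $S(t)\neq\emptyset$'' is not immediate. The missing direction, like your entire argument in (ii)--(iii), rests on the existence of a minimizer $x_t\in\arg\min_x\gamma(t,x)$: once attainment is granted, $\varphi(t)=0$ with $t<f^\star$ forces $\gamma(t,x_t)=0$ and hence $x_t\in S(t)$, a contradiction. Since the proposition's own statement already writes $x_t\in\arg\min_{x\in\rr^n}\gamma(t,x)$, attainment is evidently being taken for granted by the authors as well, so this is not a defect of your plan relative to the paper; if you wish to close it, a coercivity hypothesis on $\gamma(t,\cdot)$ (e.g., bounded sublevel sets of $f_0$, or compactness of $C$) suffices.
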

Proposition~\ref{prop:phiProps}\eqref{it:welldef} shows that $f^\star$ is the left endpoint of the halfline $\{t\in\rr:\ \varphi(t)=0\}$. Therefore problem~\eqref{eq:OptProb} has been reduced to finding the leftmost zero of the one-dimensional, monotone decreasing function $\varphi:\rr\to\rr$. 
One way to find $f^\star$ is to apply bisection to $\varphi$. Starting from an initial closed interval $[t_{-},t_{+}]$ with $t_{-}\leq f^\star\leq t_{+}$ we pick the midpoint $t=(t_{-}+t_{+})/2$ and try to determine if the level set $S(t)$ is nonempty, by solving~\eqref{eq:phi} using Algorithm~\ref{al:PSN_FEAS} (of course we could apply any other algorithm for unconstrained $\mathcal{C}^1$ optimization). If $\varphi(t)=0$ then $S(t)$ is nonempty and this means that the corresponding $x_t\in\arg\min_{x\in\rr^n}\gamma(t,x)$ is feasible for~\eqref{eq:OptProb}, therefore $t\geq f^\star$ and the new interval is reduced to $[t_{-},t]$. In the case where $\varphi(t)>0$, $S(t)$ is empty, meaning $t\leq f^\star$, so the new interval becomes $[t,t_{+}]$. 

\subsubsection{Strengthening the lower bound}
Overall, the bisection algorithm maintains a lower and upper bound for $f^\star$. Since the interval is halved at every bisection step, we obtain the standard linear convergence for bisection, that is, the algorithm stops after at most $\left\lceil\log_2\left(\frac{t_+-t_{-}}{\epsilon}\right)\right\rceil$ steps, where $\epsilon>0$ is the desired optimality threshold.
However, with almost no extra effort we can do much better in practice.

Suppose that $t<f^\star$.
The optimality condition for problem~\eqref{eq:phi} is $\nabla_x \gamma(x_t,t)=0$ or
\begin{equation}\label{eq:optphi}
(f_0(x_t)-t)_+\nabla f_0(x_t)+\sum_{i=1}^m(f_i(x_t))_+\nabla f_i(x_t)=0.
\end{equation}
From Proposition~\ref{prop:phiProps}\eqref{it:gradphi}, we have that $f(x_t)>t$.
Dividing by $f_0(x_t)-t$ in~\eqref{eq:optphi} 
and letting $\mu_i=\frac{(f_i(x_t))_+}{f_0(x_t)-t}$, $i=1,\ldots,m$,
we obtain
$$\nabla f_0(x_t)+\sum_{i=1}^m\mu_i\nabla f_i(x_t)=0.$$
Since $\mu\geq 0$, it follows that $\mu$ is a dual feasible vector, therefore
\begin{equation}\label{eq:LBdual}
f_0(x_t)+\sum_{i=1}^m\mu_i f_i(x_t)\leq f^\star.
\end{equation}
Hence $t_D=f_0(x_t)+\sum_{i=1}^m\mu_i f_i(x_t)$ provides a lower bound on $f_\star$.
Since $t_D-t= f_0(x_t)+\sum_{i=1}^m\mu_i f_i(x_t)>0$,  $t_D$ is indeed a tighter lower bound to $f^\star$ than $t$.
\subsubsection{Strengthening the upper bound}
When $t\geq f^\star$, due to~\ref{prop:phiProps}\eqref{it:gradphi}, we have that $f(x_t)\leq t$, $x_t\in\arg\min_{x\in\rr^n}\gamma(x,t)$. Therefore, if $\varphi(t)=0$, $x_t$ is a feasible
vector for problem~\eqref{eq:OptProb} and $f(x_t)$ is a tighter upper bound to $f^\star$ than $t$. 

In fact, we can do even better.
At every step of bisection we have at our disposal two vectors $x_F$ and $x_I$, corresponding to the upper and lower bounds on $f^\star$, respectively. Vector $x_F$ is feasible, i.e., $f_i(x_F)\leq 0$, $i=1,\ldots,m$, while $x_I$ is infeasible, i.e., $f_i(x_I)>0$ for at least one $i$ and $f_0(x_I)<f_0(x_F)$ (this follows directly from~\eqref{eq:LBdual}). Invoking  a result by Bertsekas~\cite[Proposition 2]{Ber99} we have that
$$f_\star\leq\frac{\Gamma}{\Gamma+1}f_0(x_F)+\frac{1}{\Gamma+1}f_0(x_I)\leq f_0(x_F),$$
where $\Gamma=\inf\{\gamma\geq 0\ |\ f_i(x_I)\leq-\gamma f_i(x_F),\ i=1,\ldots,m\}$~\cite[Proposition 2]{Ber99}. The bound is nontrivial, i.e., the rightmost inequality is strict, when $\Gamma<\infty$, which holds if and only if $f_i(x_I)\leq 0\quad\textrm{for all } i\textrm{ with }f_i(x_F)=0$.
In that case we have $\Gamma=\max_{\{i| f_i(x_F)<0\}}\frac{f_i(x_I)}{-f_i(x_F)}$.

The proposed improved bisection method is summarized in Algorithm~\ref{al:PSN_OPT}. 
\begin{algorithm}
\LinesNumbered
\DontPrintSemicolon
\caption{\texttt{PSN\_OPT}$(f_0,C)$}\label{al:PSN_OPT}
\KwIn{accuracy $\epsilon$, $x_F\in C$, $t_+=f_0(x_F)$, $x_I\notin C$, $t_{-}=f_0(x_I)$ with $t_{-}\leq f_\star$}
\KwOut{$x_F\in C$ with $f(x_F)-f^\star\leq\epsilon$}
\While{$t_+-t_{-}>\epsilon$}{
$t\gets(t_{-}+t_{+})/2$\;
Call [\texttt{empty},$x_t$]$=$\texttt{PSN\_FEAS}$(S(t))$ (Algorithm~\ref{al:PSN_FEAS})\;
\uIf{\texttt{empty}$=$\texttt{true}}{
$x_I\gets x_t$\;
$t_{-}\gets f_0(x_I)+\sum_{i=1}^m\mu_i f_i(x_I)$, 
where $\mu_i=\frac{(f_i(x_I))_+}{f_0(x_I)-t}$\;
\If{$f_i(x_I)\leq 0\quad\textrm{for all } i\textrm{ with }f_i(x_F)=0$}{
$t_{+}\gets \frac{\Gamma}{\Gamma+1}f_0(x_F)+\frac{1}{\Gamma+1}f_0(x_I)$, where $\Gamma=\max_{\{i| f_i(x_F)<0\}}\frac{f_i(x_I)}{-f_i(x_F)}$}
}
\Else{
$x_F\gets x$\;
\uIf{$f_i(x_I)\leq 0\quad\textrm{for all } i\textrm{ with }f_i(x_F)=0$}{
$t_{+}\gets \frac{\Gamma}{\Gamma+1}f_0(x_F)+\frac{1}{\Gamma+1}f_0(x_I)$, where $\Gamma=\max_{\{i| f_i(x_F)<0\}}\frac{f_i(x_I)}{-f_i(x_F)}$}
\Else{$t_+\gets f_0(x_F)$}
}
}
\end{algorithm}

\subsubsection{Equality constraints}
The approach can be immediately extended to handle linear equality constraints 
\[
	c_j'x=d_j,\ j=1,\ldots,m_e
\]
in~\eqref{eq:Feas}, by simply adding the term $\frac{1}{2}\sum_{j=1}^{n_e}(c_j'x-d_j)^2$ in~\eqref{eq:FeasMin} and to $\varphi(t)$ in~\eqref{eq:phi}, respectively.

\subsubsection{Determining initial upper and lower bounds}
To determine an initial upper bound $t_{+}$ to $f^\star$ we can solve the feasibility problem~\eqref{eq:Feas} using Algorithm~\ref{al:PSN_FEAS}. Determining a lower bound is a more delicate issue. If $f_0(x)=(1/2)x'Qx+q'x$ where $Q$ is symmetric positive definite, we can simply determine a lower bound on $f^\star$ by computing the unconstained minimum $x=-Q^{-1}q$. In general, if $f_0$ is convex and coercive we can find a $x\in\rr^n$ such that $\nabla f(x)=0$. The nonlinear system can be solved by Algorithm~\ref{al:PSN_FEAS}. In the case of a quadratic program with the cost having a positive semidefinite Hessian a lower bound to $f^\star$ can be determined by solving the following convex feasibility problem
$$\textrm{find } x\in\rr^n, \mu\in\rr^m,\st \nabla f_0(x)+\sum_{i=1}^m\mu_i\nabla f_i(x)=0, \mu\geq 0$$  
Then $\mu$ is a dual feasible solution and $q=f_0(x)+\sum_{i=1}^m\mu_if_i(x)\leq f^\star$ (even if strong duality does not hold).
Another way to determine a lower bound for general convex problems is to find a dual feasible vector, corresponding to the primal feasible vector $x_F$ corresponding to $t_{+}$:
$$\textrm{find } \mu\in\rr^m,\st \nabla f_0(x_F)+\sum_{i=1}^m\mu_i\nabla f_i(x_F)=0,\ \mu\geq 0.$$
Notice that the set of $\mu$ satisfying the conditions above is polyhedral.

\subsubsection{Special cases}
Algorithm~\ref{al:PSN_FEAS} and~\ref{al:PSN_OPT} can be used to solve linear programs (LPs), quadratic programs (QPs), and quadratically-constrained quadratic programs (QCQPs). In this case, the computations in~\eqref{eq:Nabla_F} and~\eqref{eq:Nabla2_F} require only matrix-vector products.

\subsection{Applicability to feasibility-based MPC} 
Algorithm~\ref{al:PSN_OPT} requires all the constraints in the inner feasibility problem to 
be twice differentiable functions. In particular, constraint~\eqref{eq:stab-constraint} is not 
continuously differentiable because of the $\max$ operator, so the above algorithm cannot be directly applied
to solve~\eqref{eq:MPC-problem}. However, we can simply recast the problem by introducing $N-1$ additional variables~$\epsilon_k$, $k=0,\ldots,N-1$ and replace~\eqref{eq:stab-constraint}
with
\begin{subequations}
\begin{IEEEeqnarray}{lcl}
     &\ & \epsilon_k\geq f(x_k),\ k=1,2,\ldots,N-1\label{eq:epsilon2}\\
     &\ & \epsilon_k\geq 0\\
     &\ & \sum_{k=1}^{N-1} \epsilon_k\leq \phi(t)-f(x(t))_+\label{eq:stab-constraint2}
\end{IEEEeqnarray}%
\end{subequations}
\label{eq:MPC-problem2}%
without altering feasibility and optimality of the solutions.

Moreover, in case $f$ is given as the max of convex functions $f_i\fundef{n}{}$, $i=1,\ldots,n_f$, constraints~\eqref{eq:term-constraint} and~\eqref{eq:epsilon2} can be replaced by
\begin{subequations}
\begin{IEEEeqnarray}{lcl}
     &\ & f_i(x_N)\leq 0,\ i=1,\ldots,n_f\label{eq:term-constraint3}\\
     &\ & \epsilon_k\geq f_i(x_k),\ i=1,\ldots,n_f,\ k=1,\ldots,N-1\label{eq:epsilon3}
\end{IEEEeqnarray}%
\end{subequations}
Similarly, if $g$ is given as the max of convex functions $g_i\fundef{n}{}$, $i=1,\ldots,n_g$,~\eqref{eq:stage-constraint} can be replaced by
\begin{equation}
g_i(x_k,u_k)\leq 0,\ i=1,\ldots,n_g,\ k=0,\ldots,N-1
\label{eq:stage-constraint3}
\end{equation}
The case $\ell_k=\max_{i=1,\ldots,n_{\ell k}}\ell_{ik}$ can also be dealt with by introducing
$\sum_{k=0}^Nn_{\ell k}$ additional variables $\sigma_{ik}$, $k=0,\ldots,N$, and replacing~\eqref{eq:cost} with
\begin{subequations}
\begin{align}
\min \ & \sum_{k=0}^N\sum_{i=1}^{n_{\ell k}}\sigma_{ik} \label{eq:cost3}\\
     \st\ & \sigma_{ik}\geq \ell_{ik}(x_k,u_k),\, i=1,\ldots,n_{\ell k},\,k=0,\ldots,N-1\label{eq:epsilon3}\\
     & \sigma_{iN}\geq \ell_{iN}(x_k),\ i=1,\ldots,n_{\ell N}\label{eq:epsilon31}
\end{align}
\end{subequations}
In conclusion, Algorithm~\ref{al:PSN_OPT} can be applied to solve~\eqref{eq:MPC-problem}
for any twice differentiable convex function $f_i$, $g_i$, $\ell_{ik}$.

\section{Constrained tracking to a set}
\label{sec:terminal-set}
Consider an output vector
\begin{equation}
	y(t)=Cx(t)
\label{eq:process-output}
\end{equation}
associated with process~\eqref{eq:process}, with $y\in\rr^p$, and
a corresponding output reference $r\in\RR\subset\rr^p$, 
where $\RR$ is a polytope. Assume that the following linear system
\[
	\ba{rcl}
	0&=&Ax_r+Bu_r-x_r\\
	r&=&Cx_r
	\ea
\]
admits a unique solution $(x_r,u_r)$ of steady-state state and input
vectors for all $r\in\RR$, 
and assume that $\RR$ is such that $g(x_r,u_r)<0$.

We consider the problem of controlling~\eqref{eq:process} to a desired set
$\XX_T\eqdef\{x\in\rr^n:\ S(x-x_r)\leq s\}$,
around the equilibrium state $x_r$, where $s>0$, $s\in\rr^{n_T}$, 
while satisfying the input constraints
\begin{equation}
\umin\leq u(t)\leq \umax
\label{eq:input-constraints}
\end{equation}
and the output constraints
\begin{equation}
\ymin \leq y(t) \leq \ymax
\label{eq:output-constraints}
\end{equation}
with $\umin<0<\umax$, $\ymin<0<\ymax$.
In this case, we define the function $g$ in~\eqref{eq:stage-constraint} as the convex and piecewise affine function
\begin{equation}
g(x,u)=\max_{i=1,\ldots,q}\{G^x_i(Ax+Bu)+G^u_iu-g^0_i\}
\label{eq:linear-constraints}
\end{equation}
where $G^x=\smallmat{0\\0\\ C\\-C}$, $G^u=\smallmat{I\\-I\\0\\0}$,
$g^0=\smallmat{\umax\\-\umin\\\ymax\\-\ymin}$, and $q\eqdef 2m+2p$.
An example of desired set $\XX_T$ is given by $S=\smallmat{C\\-C}$, $s=\smallmat{\emax\\-\emin}$, so that convergence to $\XX_T$ implies satisfying
the constraint on the tracking error $\emin \leq y-r \leq \emax$ asymptotically.

\subsection{Quadratic functions}
We consider the ellipsoidal terminal set $\Ss$ defined by
\begin{equation}
	f(x)=(x-x_r)'P(x-x_r)-\rho_r
\label{eq:quadratic_f(x)}
\end{equation}
where $P=P'\geq 0$ and $\rho_r$ are determined in accordance with the following theorem.

\begin{theorem}
Let $Q=Q'\geq 0$, $Q\in\rr^{n\times n}$, $Y\in\rr^{m\times n}$,
$X=X'\geq 0$, $X\in\rr^{m\times m}$, be the solution of the semidefinite program
\begin{subequations}
\begin{IEEEeqnarray}{lcl}
\max &\ & (\det Q)^\frac{1}{n} \label{eq:LMI-cost}\\
\st  &\ & \matrice{cc}{Q & (AQ+BY)'\\ AQ+BY & \lambda Q}\geq 0\label{eq:LMI-contractive}\\
     &\ & \matrice{cc}{X & Y\\ Y' & Q}\geq 0 \label{eq:LMI-input-constraint-1} \\
     &\ & X_{ii}\leq \bar u_i^2,\ i=1,\ldots,m \label{eq:LMI-input-constraint-2}\\
     &\ & \matrice{cc}{Q & (AQ+BY)'C_i'\\ C_i(AQ+BY) & \bar {y}_i^2}
     \label{eq:LMI-output-constraint}\\&\ & ~~~i=1,\ldots,p\nonumber\\
     &\ & \matrice{cc}{Q & Q'S_i'\\
         S_iQ & s_i^2},\ i=1,\ldots,n_T\label{eq:LMI-terminal-constraint}
\end{IEEEeqnarray}%
\label{eq:LMI}%
\end{subequations}
where $0\leq \lambda\leq 1$ is a given contractive factor, and 
\begin{subequations}
\beqar
\bar u_i\eqdef\min_{j=1,\ldots,n_R}\{u_{\rm max,i}-(u_{r,j})_i,-u_{\rm min,i}+(u_{r,i})_i\}
\label{eq:umax} \\
\bar y_i\eqdef\min_{j=1,\ldots,n_R}\{y_{\rm max,i}-(r_{j})_i,-y_{\rm min,i}+(r_{j})_i\}
\label{eq:ymax}
\eeqar%
\label{eq:umax-ymax}%
\end{subequations}
for all $i=1,\ldots,q$, where $\{r_j\}_{j=1}^{n_R}$ are the vertices of $\RR$. 
Then, by setting $K=YQ^{-1}$, $P\eqdef Q^{-1}$, $f(x)=(x-x_r)'P(x-x_r)-\rho_r$, 
the set $\Ss=\{x: (x-x_r)'P(x-x_r)\leq \rho_r\}$ is constrained controlled invariant
under the control law $u=K(x-x_r)+u_r$ in that $x\in\Ss$ implies
\begin{subequations}
\begin{IEEEeqnarray}{c}
Ax+Bu\in\Ss\\
\umin\leq K(x-x_r)+u_r\leq \umax\\
\ymin\leq C(Ax+Bu)\leq \ymax \\
S x\leq s
\end{IEEEeqnarray}%
\label{eq:LMI-properties}%
\end{subequations}
for all $r\in\RR$, where
\begin{subequations}
\begin{equation}
\rho_r=\min_{i=1,\ldots,q+n_T}\left\{
\frac{\bar b_i^2}{\bar A_iQ\bar A_i'}
\right\}\geq 1
\label{eq:rho_r}
\end{equation}
\begin{equation}
	\bar A=\matrice{c}{K\\-K\\C(A+BK)\\-C(A+BK)\\S},\ 
    \bar b=\matrice{c}{\umax-u_r\\-\umin+u_r\\\ymax-r\\-\ymin+r\\s}
\label{eq:bar_A_bar_b}
\end{equation}%
\label{eq:rho_inflated}%
\end{subequations}
\end{theorem}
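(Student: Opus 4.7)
The plan is to pass to shifted coordinates $z=x-x_r$, $v=u-u_r$; since $(x_r,u_r)$ is an equilibrium, the dynamics $z^+=Az+Bv$ reduce to $z^+=(A+BK)z$ under the feedback $u=K(x-x_r)+u_r$. The identity $AQ+BY=(A+BK)Q$, valid because $Y=KQ$, will let me convert each LMI into a statement about the closed-loop matrix $A+BK$. I would then reduce all four conclusions of~\eqref{eq:LMI-properties} to the single question: for each linear constraint $\bar A_i z\leq\bar b_i$ listed in~\eqref{eq:bar_A_bar_b}, is the ellipsoid $\Ss$ contained in the corresponding half-space? Throughout, I would tacitly use $Q\succ 0$, implicit in the feasibility of the SDP (otherwise $P$ and $K$ would not be well defined).

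The first step is contractive invariance. Applying the Schur complement to~\eqref{eq:LMI-contractive} yields $(A+BK)Q(A+BK)'\leq\lambda Q$, which for $P=Q^{-1}$ is equivalent to $(A+BK)'P(A+BK)\leq\lambda P$. Hence $(x-x_r)'P(x-x_r)\leq\rho_r$ forces $((Ax+Bu)-x_r)'P((Ax+Bu)-x_r)\leq\lambda\rho_r\leq\rho_r$, which is the first conclusion of~\eqref{eq:LMI-properties}. For the remaining three, I would use the standard support-function identity $\max_{z'Pz\leq\rho_r}\bar A_i z=\sqrt{\rho_r\,\bar A_i Q\bar A_i'}$, so that $\bar A_i z\leq\bar b_i$ holds on the ellipsoid iff $\rho_r\,\bar A_i Q\bar A_i'\leq\bar b_i^2$. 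Inspecting~\eqref{eq:bar_A_bar_b}, the rows of $\bar A$ are precisely those needed for the input bounds ($\pm K$), the one-step output bounds ($\pm C(A+BK)$, since $y(t+1)=C(A+BK)z+r$), and the terminal-set inclusion ($S$); the definition~\eqref{eq:rho_r} of $\rho_r$ as the minimum of $\bar b_i^2/(\bar A_i Q\bar A_i')$ is exactly what makes all these inequalities hold at once.

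The lower bound $\rho_r\geq 1$ is the only nontrivial step and is where I expect to spend the most care. It suffices to prove $\bar A_i Q\bar A_i'\leq\bar b_i^2$ for every row. For the input rows, Schur on~\eqref{eq:LMI-input-constraint-1} gives $YQ^{-1}Y'\leq X$, whence $K_iQK_i'=(YQ^{-1}Y')_{ii}\leq X_{ii}\leq\bar u_i^2$ by~\eqref{eq:LMI-input-constraint-2}; for the output rows, Schur on~\eqref{eq:LMI-output-constraint} yields $C_i(A+BK)Q(A+BK)'C_i'\leq\bar y_i^2$; and for the terminal rows, Schur on~\eqref{eq:LMI-terminal-constraint} yields $S_iQS_i'\leq s_i^2$. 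The final ingredient is that $\bar b_i\geq\bar u_i$ (resp.\ $\bar y_i$) for the matching input (resp.\ output) rows, which rests on the fact that the steady-state map $r\mapsto(x_r,u_r)$ is affine (the defining linear system has a unique solution over $\RR$), so each scalar slack $u_{\max,i}-u_{r,i}$, $-u_{\min,i}+u_{r,i}$, $y_{\max,i}-r_i$, $-y_{\min,i}+r_i$ is affine in $r$ and attains its minimum over $\RR$ at a vertex --- which is precisely how~\eqref{eq:umax}--\eqref{eq:ymax} defines $\bar u_i$ and $\bar y_i$.
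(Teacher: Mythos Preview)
Your proposal is correct and follows essentially the same approach as the paper: shift to deviation coordinates $\Delta x=x-x_r$, $\Delta u=u-u_r$, use $AQ+BY=(A+BK)Q$ to rewrite the LMIs in closed-loop form, and then reduce each of~\eqref{eq:LMI-properties} to an ellipsoid-in-half-space inclusion via the support-function identity. The paper's own proof is far terser---it simply invokes ``standard arguments'' and cites~\cite{KBM96} for the fact that the LMIs guarantee~\eqref{eq:LMI-properties} on the unit ellipsoid $\Delta x'P\Delta x\leq 1$, then observes that $\rho_r$ in~\eqref{eq:rho_r} is the largest inflation preserving those inclusions---whereas you actually spell out the Schur-complement manipulations and the vertex argument for $\rho_r\geq 1$, which is exactly what those ``standard arguments'' amount to.
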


\proof Let $\Delta x\eqdef x-x_r$, $\Delta u\eqdef u-u_r$, 
$\Delta y\eqdef y-r$. Clearly $\Delta x(t+1)=Ax+Bu-x_r=A\Delta x(t)+B\Delta u(t)$ and $\Delta y(t)=C\Delta x(t)$, along with the constraints 
$\umin-u_r\leq \Delta u(t)\leq \umax-u_r$,
$\ymin-r\leq C\Delta x(t)\leq \ymax-r$. 
The robust satisfaction of properties~\eqref{eq:LMI-properties} 
with respect to $r\in\RR$ for all $x$ such
that $\Delta x'P\Delta x\leq 1$, under the control law $\Delta u(t)=K\Delta x(t)$, follows by
standard arguments from the inequality constraints in~\eqref{eq:LMI} (see, e.g.,~\cite{KBM96}).
For a given $r\in\RR$, the scalar $\rho_r$ defined by~\eqref{eq:rho_inflated} provides the largest ellipsoid centered in $x_r$ and defined by $P$ 
such that~\eqref{eq:LMI-properties} are satisfied, where $\rho_r\geq 1$.
\endproof

\subsection{Polyhedral terminal set $\Ss$}
For a given under asymptotically stabilizing feedback control law $u(t)=K(x(t)-x_r)+u_r$,
consider the polyhedral terminal set $\Ss$ defined by
\begin{equation}
	f(x)=\max\{H_i'(x-x_r)-K_i\}
\label{eq:PWA_f(x)}
\end{equation}
where $\Ss=\{x:\ H(x-x_r)\leq K\}=\{\Delta x:\ H\Delta\leq K\}$ is a maximum admissible polyhedral invariant set~\cite{GT91} for the closed-loop system $\Delta x(t+1)=(A+BK)\Delta x(t)$
and with respect to the constraints $\bar A\Delta x\leq \bar b_{\rm min}$, where $\bar A$ is
defined in~\eqref{eq:bar_A_bar_b}, and $\bar b_{\rm min}$ is defined as 
in~\eqref{eq:bar_A_bar_b} by replacing $(\umax-u_r)_i$
with $\min_{j=1,\ldots,n_R}\{(\umax-u_{r_j})_i\}$,
$(-\umin+u_r)_i$
with $\min_{j=1,\ldots,n_R}\{(-\umin+u_{r_j})_i\}$, $i=1,\ldots,m$, 
$(\ymax-r)_i$
with $\min_{j=1,\ldots,n_R}\{(\ymax-r_j)_i\}$ and
$(-\ymin+r)_i$
with $\min_{j=1,\ldots,n_R}\{(-\ymin+r_j)_i\}$, $i=1,\ldots,p$. Clearly, 
the size of $\Ss$ depends on the size of $\XX_T$ and on how large are
the boxes $\{y\in\rr^p:\ \ymin\leq y\leq \ymax\}$ with respect to $\RR$ and 
$\{u\in\rr^m:\ \umin\leq u\leq \umax\}$ with respect to the set $\{u\in\rr^m:\ u=u_r,\
r\in\RR\}$.

\section{Simulation Results}
\label{sec:simulations}

Consider the linear system described by the transfer function
\begin{equation}
	G(s)=\frac{26(s+1)}{s^2+2s+26}
\label{eq:example-model}
\end{equation}
Model~\eqref{eq:example-model} is converted to discrete-time by
exact sampling plus zero-order hold with sampling time $T_s=0.2$~s, 
resulting in the state-space model with matrices $A=\smallmat{ 0.4424  &1\\
   -0.4746  &  0.4424}$, $B=\smallmat{0\\2.0623}$, $C=\smallmat{
   -0.7013 &  1.9407}$, $D=0$. The system is 
subject to the constraints
\begin{equation}
	-1\leq u\leq 1,\ -1\leq y\leq 1
\label{eq:example-IO-constraints}
\end{equation}
leading to defining $g(x,u)$ as in~\eqref{eq:linear-constraints}.
We setup the MPC problem~\eqref{eq:MPC-problem} with $N=6$,
\begin{equation}
	\ell_k(x,u)=(x-x_r)'Q(x-x_r)+(u-u_r)'R(x-x_r)
\label{eq:example-stage-cost}
\end{equation}
$Q=10C'C$, $R=1$, for all $k=0,\ldots,N-1$, and 
\begin{equation}
	\ell_N(x)=(x-x_r)'P(x-x_r)
\label{eq:example-stage-cost}
\end{equation}
where $P$ is the solution of the discrete algebraic
Riccati equation associated with $A$, $B$, $Q$, and $R$,
and $x_r=\smallmat{2.6252\\1.4639}r$, $u_r=r$. The possible reference 
signals are restricted in the interval $\RR=[-0.9,0.9]$,
while the desired target set $\XX_T=\{x\in\rr^2:\ \|x-x_r\|_\infty\leq 0.1\}$.
We start from the initial condition $x(0)=\smallmat{0\\0}$ and command the set-point $r=0.5$.
We consider the cumulated cost 
\[
	J\eqdef\sum_{t=0}^{10}\big(\ell_N(x_N^t)+\sum_{k=0}^{N-1}\ell_k(x_k^t,u_k^t)\big)
\]
as a measure of closed-loop performance, where $\{u_k^t\}_{k=0}^{N-1}$ is the solution of problem~\eqref{eq:MPC-problem} chosen at time $t$, and $\{x_k^t\}_{k=0}^{N}$ the corresponding state trajectory.
We consider two settings for function $f$: 

Case ($i$): the convex quadratic function as in~\eqref{eq:quadratic_f(x)}, where $P=\smallmat{ 105.4493 & 23.9713\\  23.9713 & 105.4493}$ is obtained by~\eqref{eq:LMI}
with $\lambda=1$, along with the terminal gain $K=\smallmat{  0.1968 &  -0.2898}$.
Table~\ref{tab:ellipse} 
shows the cumulated cost $J$ for different maximum values of permitted CPU time to solve problem~\eqref{eq:MPC-problem}. The corresponding trajectories are depicted in Figures~\ref{fig:plots-ellipse},~\ref{fig-state-space-ellipse}.

\begin{table}[h]
\begin{center}
\caption{Performance vs. allocated CPU time (ellipsoidal constraints)}
\label{tab:ellipse}
\begin{tabular}{|c|c|}
\hline
max CPU time (ms) & Cumulated cost $J$\\\hline\hline
 unbounded &   8.7865\\\hline
 60 &  22.6572\\\hline
 40 &  26.4663\\\hline
 20 &  31.4528\\\hline
\end{tabular}
\end{center}
\end{table}

Case ($ii$): the convex piecewise affine function 
as in~\eqref{eq:PWA_f(x)}, where $H=\smallmat{ 1 & 0 & -1 & 0 & 0.4424 &  -0.4424\\
0 & 1 & 0 & 01 & 1 & -1}'$,
$K=0.1\smallmat{1&1&1&1&1&1}'$ is the maximum $\lambda$-contractive invariant set
for the closed-loop system $\Delta x(t+1)=(A+BK_{\rm LQR})\Delta x(t)$,
$K_{\rm LQR}$ is the LQR gain associated with $A$, $B$, $Q$, and $R$, and $\lambda=1$,
computed as described in~\cite{BOPS11}. Table~\ref{tab:polyhedral} 
shows the cumulated cost $J$ for different maximum values of permitted CPU time to solve problem~\eqref{eq:MPC-problem}. The corresponding trajectories are depicted in Figures~\ref{fig:plots-polyhedral},~\ref{fig-state-space-polyhedral}.

\begin{table}[h]
\begin{center}
\caption{Performance vs. allocated CPU time (polyhedral constraints)}
\label{tab:polyhedral}
\begin{tabular}{|c|c|}
\hline
max CPU time (ms) & Cumulated cost $J$\\\hline\hline
     unbounded &   9.0075\\\hline
      10 &  23.4491\\\hline
       5 &  26.4646\\\hline
       1 &  26.7551\\\hline
\end{tabular}
\end{center}
\end{table}

Finally, we compare the performance of the new solver described in Section~\ref{sec:algo}
(implemented in interpreted MATLAB code) against the commercial solver Gurobi 5.6.2~\cite{gurobi14} in solving QCQP and QP problems
to optimality, and also to qpOASES~\cite{QPOASES} for QP's. We consider problems deriving from the MPC setup with ellipsoidal (QCQP) 
and polyhedral (QP) constraints described above, for an increasing prediction horizon $N$.
The results are depicted in Figure~\ref{fig:CPUtime-ellipse} (QCQP case) and Figure~\ref{fig:CPUtime-polyhedron} (QP case), respectively.

\section{Conclusions}
The contribution of this paper is twofold. From an optimization viewpoint, we have introduced
a very efficient numerical solver that can solve convex feasibility and optimization problems, and that is at least an order of magnitude faster than commercial state-of-the-art (interior-point) solvers as the dimension of the problem increases. By taking advantage of the way the solver computes an optimal solution via a sequence of convex feasibility problems, from a control viewpoint we proposed an MPC strategy for convergence to a terminal set that allows an \emph{anytime optimization} philosophy, that is of improving the optimality of the control
move with respect to a given performance specification only if CPU resources are available during the sampling interval. We believe that the approach has potential applications in embedded MPC systems where a large-enough time-slot for computations cannot be guaranteed a priori, a rather typical situation in multitask real-time systems. 

\begin{figure}[t!]
\begin{center}
\includegraphics[width=\hsize]{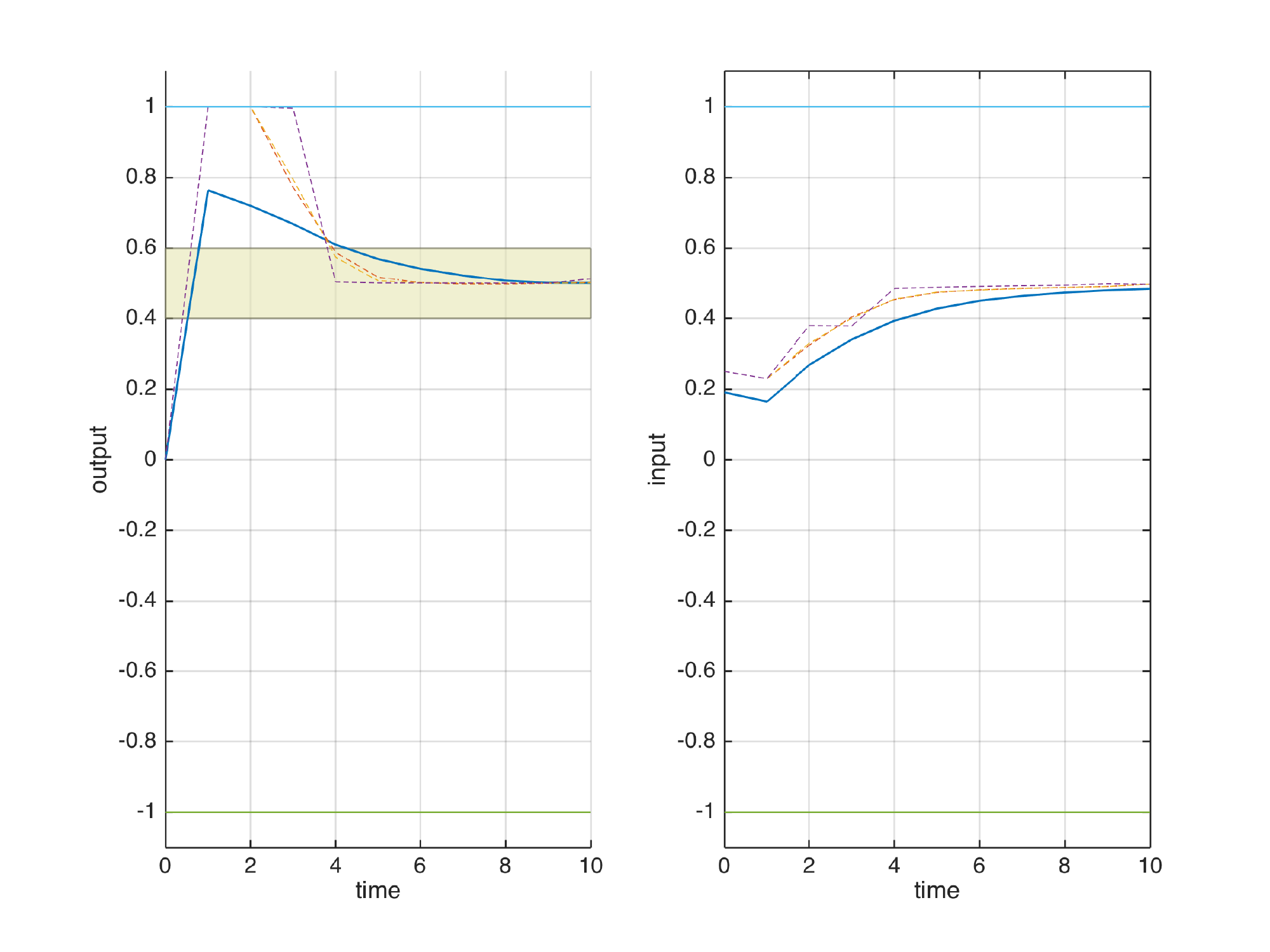}
\end{center}
\caption{System trajectories for varying values of available CPU time (solid=optimal),
ellipsoidal constraints}
\label{fig:plots-ellipse}
\end{figure}

\begin{figure}[t!]
\begin{center}
\includegraphics[width=\hsize]{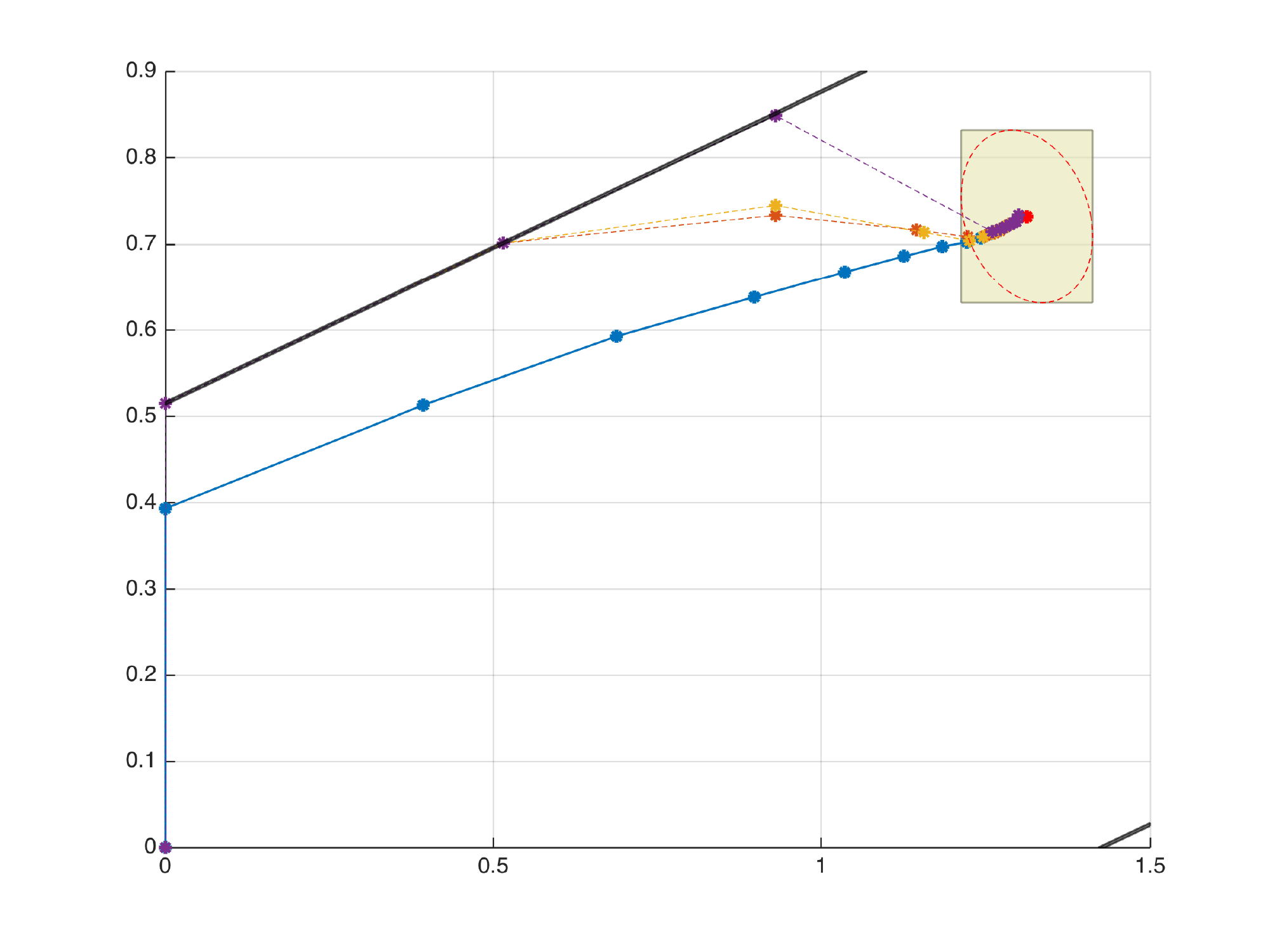}
\end{center}
\caption{State-space trajectories for varying values of available CPU time (solid=optimal),
ellipsoidal constraints}
\label{fig-state-space-ellipse}
\end{figure}


\begin{figure}[t!]
\begin{center}
\includegraphics[width=\hsize]{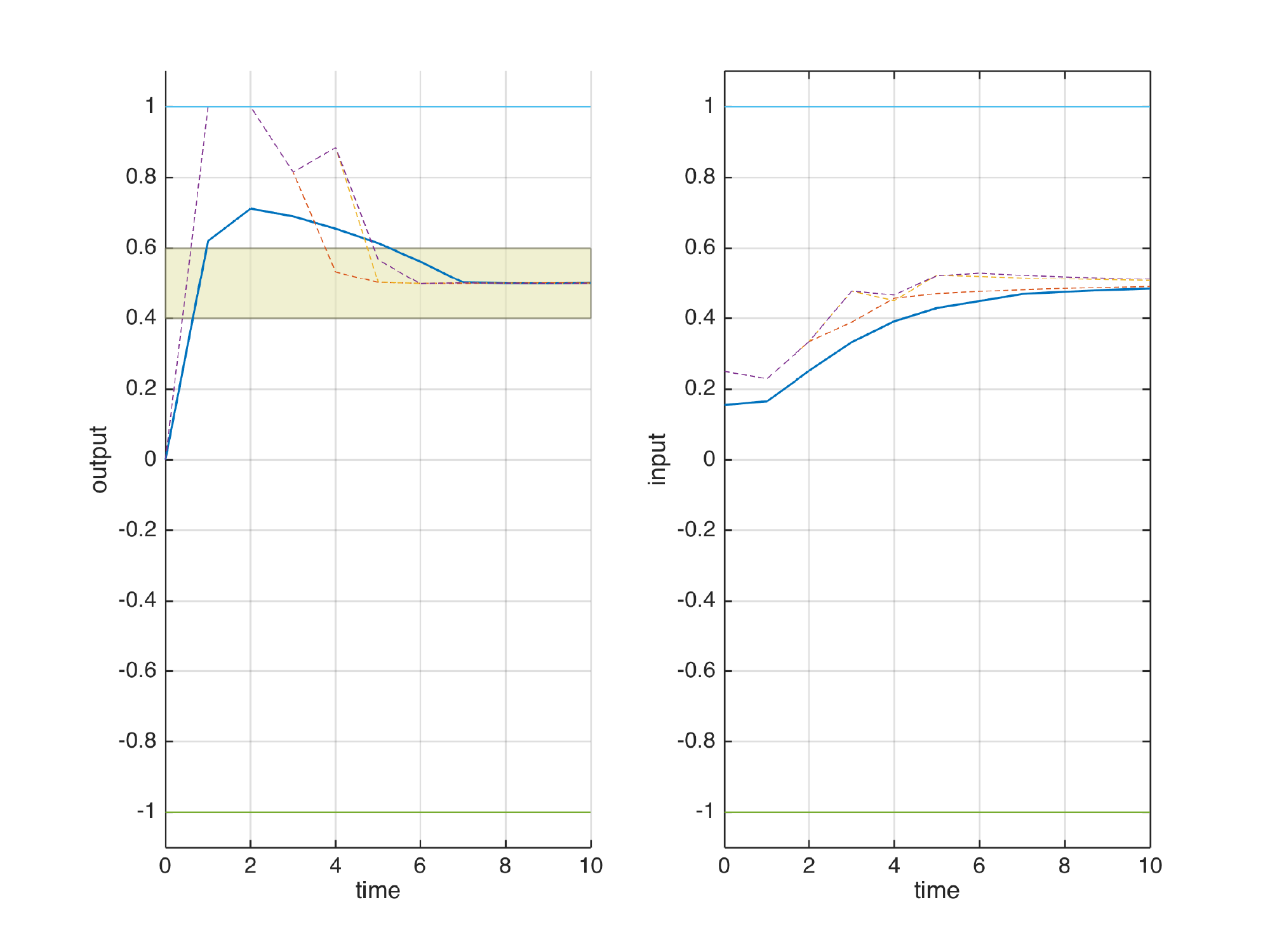}
\end{center}
\caption{System trajectories for varying values of available CPU time (solid=optimal),
polyhedral constraints}
\label{fig:plots-polyhedral}
\end{figure}

\begin{figure}[t!]
\begin{center}
\includegraphics[width=\hsize]{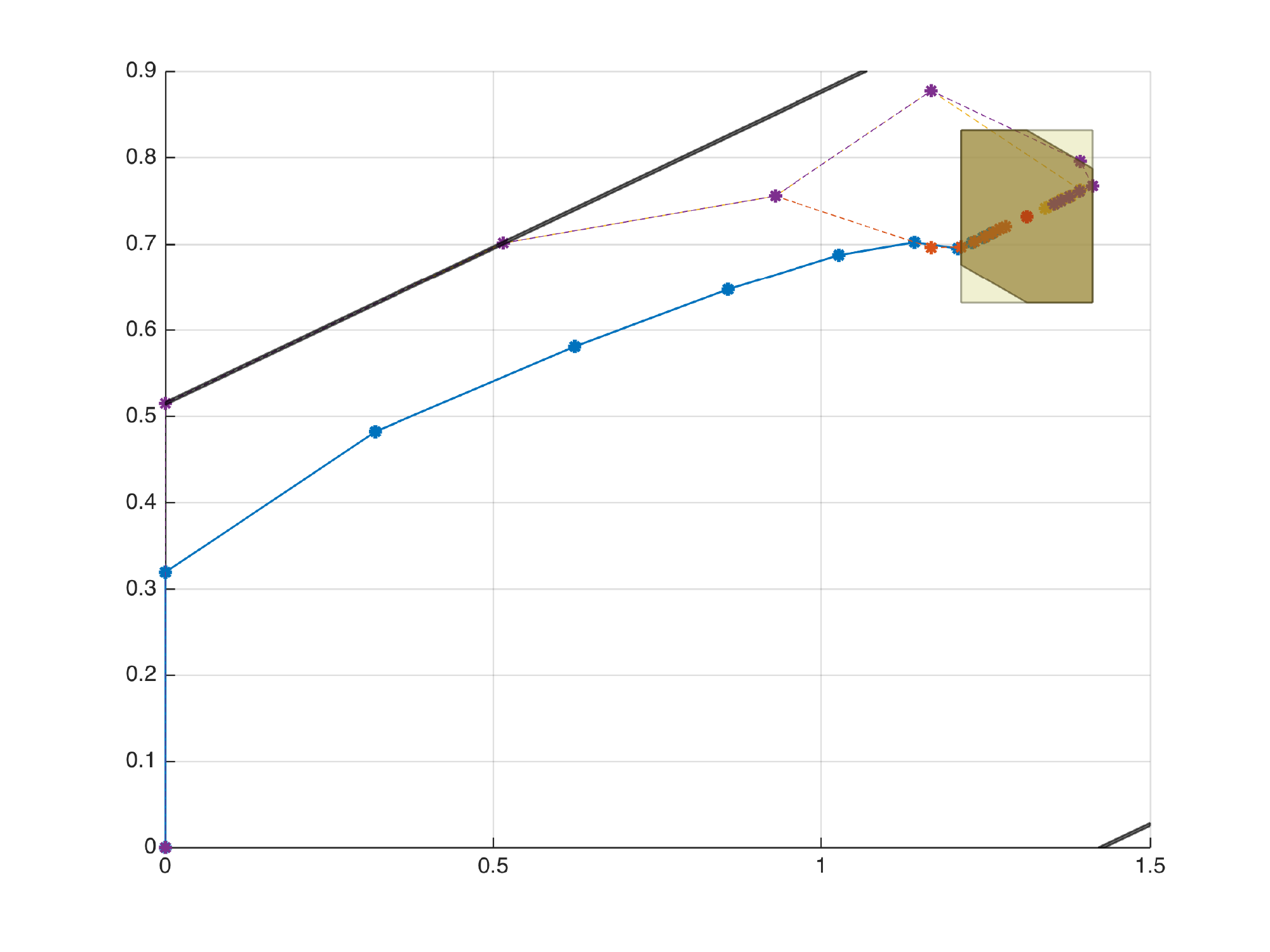}
\end{center}
\caption{State-space trajectories for varying values of available CPU time (boldface=optimal),
polyhedral constraints}
\label{fig-state-space-polyhedral}
\end{figure}

\begin{figure}[t!]
\begin{center}
\includegraphics[width=\hsize]{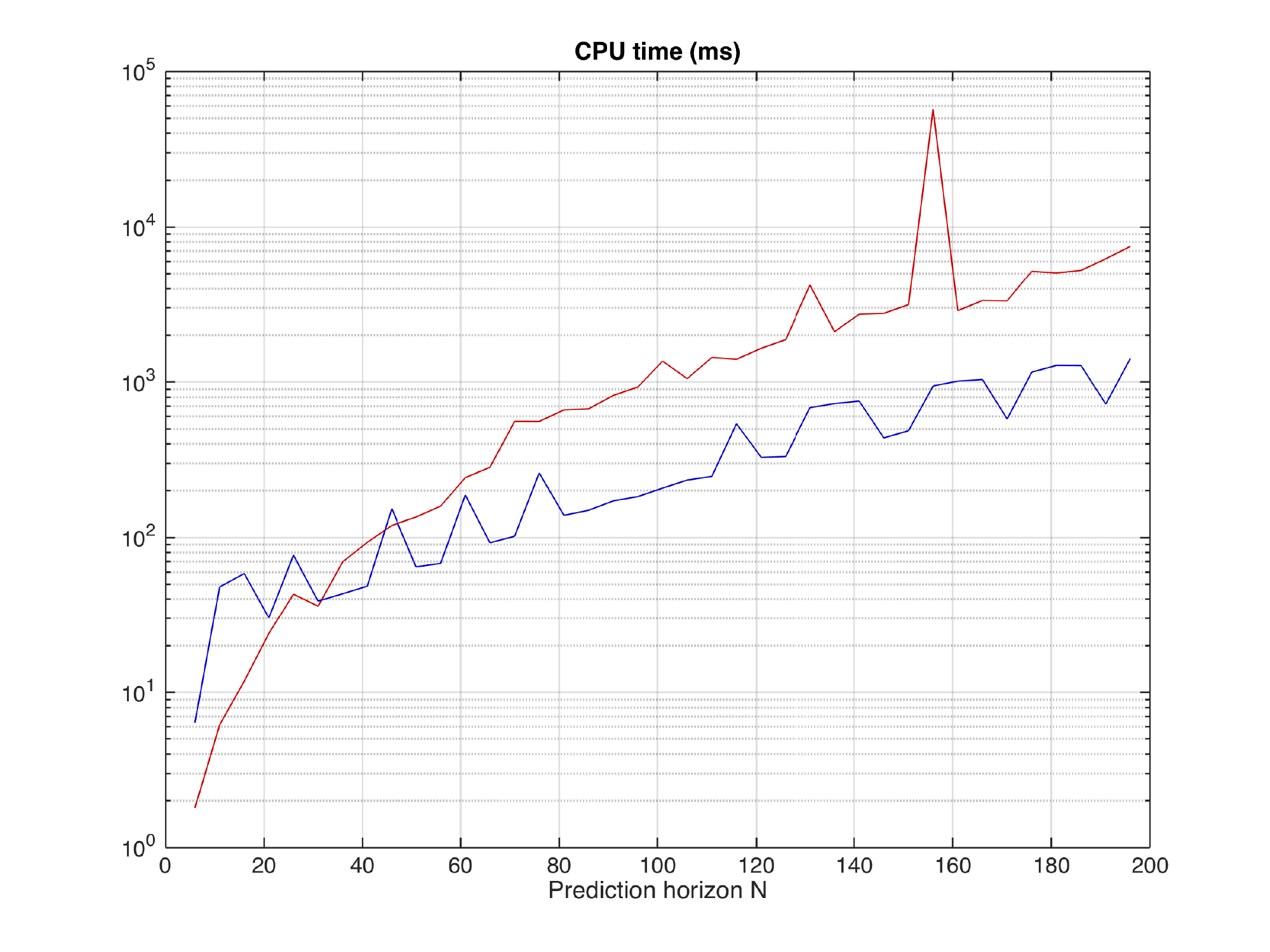}
\end{center}
\caption{QCQP: Algorithm~\ref{al:PSN_OPT} (blue) vs Gurobi (red)}
\label{fig:CPUtime-ellipse}
\end{figure}

\begin{figure}[t!]
\begin{center}
\includegraphics[width=\hsize]{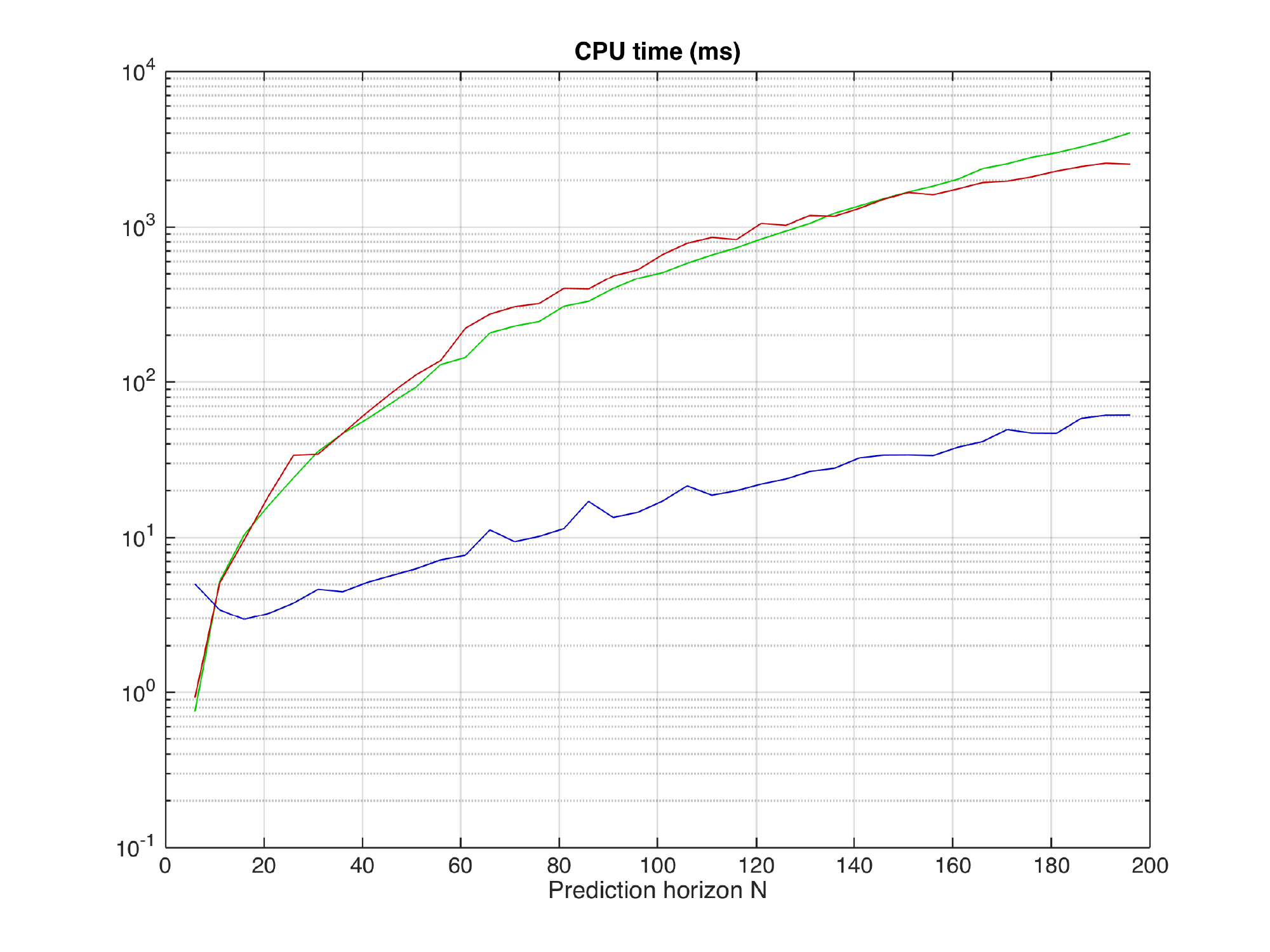}
\end{center}
\caption{QP: Algorithm~\ref{al:PSN_OPT} (blue), Gurobi (red), qpOASES (green)}
\label{fig:CPUtime-polyhedron}
\end{figure}

\end{document}